\renewcommand{\phi}{\varphi}
\newcommand{\rr}{\mathbb{R}}
\newcommand{\eqdef}{\ensuremath{\stackrel{\mbox{\upshape\tiny def.}}{=}}}
\date{February 20, 2023}
\begin{document}
\title{Generative Ornstein–Uhlenbeck Markets via Geometric Deep Learning} %
\author{Anastasis Kratsios\inst{1}\orcidID{0000-0001-6791-3371} 
\and \\ Cody Hyndman\inst{2,3}\orcidID{0000-0001-5849-1584}
}

\authorrunning{A. Kratsios and C. Hyndman}

\institute{Department of Mathematics and Statistics, McMaster University, Canada 
\email{kratsioa@mcmaster.ca}\\
\url{https://anastasiskratsios.github.io/} 
\and
Department of Mathematics and Statistics, Concordia University, Canada\\
\email{cody.hyndman@concordia.ca}\\
\url{http://mypage.concordia.ca/alcor/chyndman/}
}
\maketitle             
\begin{abstract}
We consider the problem of simultaneously approximating the conditional distribution of market prices and their log returns with a single machine learning model.  We show that an instance of the GDN model of \cite{kratsios2022universal} solves this problem without having prior assumptions on the market's ``clipped'' log returns, other than that they follow a generalized Ornstein-Uhlenbeck process with a priori unknown dynamics.  We provide universal approximation guarantees for these conditional distributions and contingent claims with a Lipschitz payoff function. 

\keywords{Geometric Deep Learning \and Market Generation \and Optimal Transport \and Mathematical Finance \and Gaussian Measures}
\end{abstract}

\noindent

\noindent
{\bf Mathematics Subject Classification (2020):} 68T07, 91G20, 91G60.

\section{Introduction}

In classical portfolio theory, one considers a portfolio comprised of $D$ predetermined risky assets and a riskless asset.  The objective is to identify the most ``efficient portfolios'' by which we mean portfolios exhibiting the greatest gains while not exceeding a fixed level of risk or variability.  Here, a portfolio's gains are quantified by its expected (log) returns, and its risk is quantified by the variance of its (log) returns.  Thus, efficient portfolios are defined by optimizers of the following problem
\begin{equation}
\label{eq:SharpRatio}
\hat{w}(\gamma,\mu,\Sigma)\triangleq 
\underset{
	\underset{
		\bar{1}^{\star}w=1
	}{w \in \rr^D}
}{
	\operatorname{argmin}}
\left(
-\gamma\mu^{\star}w +\frac{
	w^{\star}\Sigma w
}{2}
\right)
.
\end{equation}
In~\eqref{eq:SharpRatio}, $w$ is the vector of portfolio weights expressed as the proportion of wealth invested in each risky asset, $\mu \in \rr$ is the vector of the expected (log) returns of the risky assets, $\Sigma$ is the covariance matrix of that portfolio's (log) returns, $\gamma$ is a parameter balancing the objectives of maximizing the portfolio return versus minimizing the portfolio variance, $\bar{1}$ is the vector with all its components equal to $1$, and $^\star$ denotes matrix transpose operator.  
If $\Sigma$ is non-singular, the unique optimal solution to equation~\eqref{eq:SharpRatio} is given in closed-form by
\begin{equation}
\label{eq:optim_portfolio}
\hat{w}(\gamma,\mu,\Sigma) = 
\frac{\Sigma^{-1}\bar{1}}{\bar{1}^{\star} \Sigma^{-1} \bar{1}}
+ \gamma \left( \Sigma^{-1}\mu - \frac{\bar{1}^{\star}\Sigma^{-1}\mu}{\bar{1}^{\star}\Sigma^{-1}\bar{1}}
\Sigma^{-1}\bar{1}
\right)
.
\end{equation}
The particular case where $\gamma$ is set to $0$ is the minimum variance portfolio of \cite{markowitz1968portfolio}.  The minimum-variance portfolio $\hat{w}(0,\mu,\Sigma)$ may also be derived by minimizing the portfolio variance subject to the budget constraint $\bar{1}^\star w=1$.  Accordingly, we consider the case where $\Sigma$ is \textit{non-singular}.  The optimality of~\eqref{eq:SharpRatio} is contingent on the \textit{normality} of the asset's (log) returns in this \textit{static} picture of the market.

In reality, any financial market is continually and randomly evolving.  Therefore, one must actively update the risky asset's mean $\mu$ and covariance matrix $\Sigma$ in~\eqref{eq:optim_portfolio} to maintain an efficient portfolio.  Since future market prices are unknown, so are the efficient portfolios in~\eqref{eq:SharpRatio}.  Thus, our \textit{objective} will be to forecast \textit{both} the conditional evolution of market prices and the distribution of their log returns up to a regularizing factor.

\paragraph{Encoding Market Dynamics via Clipped log returns}
In stochastic finance, the market's continual random evolution is typically formalized by a $(0,\infty)^D$-valued stochastic process $S_{\cdot}\eqdef (S_t)_{t\ge 0}$ defined on a complete filtered probability space $(\Omega,\mathcal{F},\mathbb{F}\eqdef (\mathcal{F}_t)_{t\ge 0},\mathcal{P})$.  The components of $S_{\cdot}$ describe the evolving market prices.  For simplicity, we omit the riskless asset, assuming that the continuously compounded interest rate is a constant, $r\ge 0$.

Since problem~\eqref{eq:SharpRatio}, concerns the log returns of the market's assets, i.e.\ one often models a latent Gaussian process $X_{\cdot}$ driving the market prices where $S_t\approx e^{X_t}$ (where the exponential map is applied component-wise).  
This is primarily due to three reasons: 1) stock prices cannot be non-positive, 2) most stock returns are somewhat log-normally distributed on an appropriate time-scale, and 3) the distribution of a stock's log returns are mathematically convenient.  

We note that any asset's (log) returns can be substantial, either in the negative or positive directions, but realistically they cannot be arbitrarily large.  With this in mind, it will be analytically convenient to work with ``clipped (or regularized) log returns'' which also satisfy the heuristics (1)-(3).  By ``clipped log returns'' we encode the evolution of the market's prices $S_{\cdot}$ as
\begin{equation}
\label{eq:clipped_price_process}
\begin{aligned}
S_t 
\eqdef  
\mathcal{E}(X_t) 
\mbox{ and }
        \mathcal{E}(x)
    \eqdef 
        \exp\biggl(
            \frac{1}{\min\{1,\|x/M\|\}}\cdot x
        \biggr)
\end{aligned}
\end{equation}
for all $t\ge 0$, where the exponential map $\exp$ is \textit{applied component-wise} to any vector in $\mathbb{R}^D$, and the ``clipping threshhold'' $M>0$ is a fixed and large.  From a practical standpoint, both ways of encoding the evolution of market prices $e^{X_t}$ and $\mathcal{E}(X_t)$, into the latent ``log returns-like'' Gaussian process $X_{\cdot}$, are virtually indistinguishable for $M$ large enough.  The main technical advantage of $\mathcal{E}$ over $\exp$ is that it Lipschitz; thus, it is compatible with the optimal-transport toolbox.  

The transformation $\mathcal{E}$ is also appealing from the \textit{stochastic analytic} vantage point.  This is because it is the composition of a convex function with a smooth function; whence, if $X_{\cdot}$ is a semi-martingale, then we can directly compute the dynamics of $S_{\cdot}$ from those of $X_{\cdot}$ using a non-smooth It\^{o} formula (e.g.\ \cite{ProtterCarlen_1992_ItoConvexNonsmooth_IJM,FollmerProtter__NonsmoothIto_PTRF2000}).

\paragraph{An Interpretable but Model-Agnostic Approach}
We operate in the interpretable scenario where the clipped log returns process $X_{\cdot}$'s are not only conditionally Gaussian, but they are a strong solution to a simple and interpretable \textit{stochastic differential equation (SDE)}.  We consider the generalized Ornstein-Uhlenbeck (OU) process
\begin{equation}
\label{eq:SDE__assumed}
X_t^x = x + \int_0^t \, (\mu_s + M_s\,X_s^x)\,ds + \int_0^t\, \sigma_s\,dW_s,
\end{equation}
where $W_{\cdot}\eqdef (W_t)_{t\ge 0}$ is a $D$-dimensional $\mathbb{F}$-Brownian motion, $\alpha:\mathbb{R}\rightarrow \mathbb{R}^D$ and $\beta:\mathbb{R}\rightarrow \mathbb{R}^{D\times D}$ are a-priori \textit{unknown} continuously differentiable Lipschitz functions, and $\sigma:\mathbb{R}\rightarrow \mathbb{R}^{D\times D}$ is an a-priori \textit{unknown} Lipschitz functions; further each $\sigma_t$ a symmetric positive definite matrix (for $t\ge 0$).  
We drop the superscript emphasizing the dependence of $X_{\cdot}^x$ on the initial condition $x$ whenever clear from the context.

The first appeal of~\eqref{eq:SDE__assumed} is that, given any $\mu_{\cdot}$ and any $\sigma_{\cdot}$, the dynamics of $X_t$ and $S_t\approx X_t$ are readily \textit{interpretable}. 
The second appeal of~\eqref{eq:SDE__assumed}, after a simple/classical computation, shows that each $X_t$ follows a $D$-dimensional Gaussian distribution with mean $\int_0^t\,\mu_s \,ds$ and \textit{non-singular} covariance $\int_0^t\, \sigma_s \sigma_s^{\top}\,ds$; which we denote $\mathcal{N}_D\big(\bar{\mu}_t,\,\int_0^t\, \sigma_s\sigma_s^{\top}\,ds\big)$ where $\bar{\mu}_t$ solves the ODE $\partial_t{\bar{\mu}}_t = \mu_t + M_t\bar{\mu}_t$ for the initial condition $\mu_0=x$.  
Note that if $M_t=0$ then $\bar{\mu}_t=\int_0^t\, \mu_s\,ds$.

As an informal illustration, suppose that $\mu_t = \mu_0 - \sigma_0^2/2$, $M_t=0$, and $\sigma_t=\sigma_0$ in~\eqref{eq:SDE__assumed}.  Then, as $M$ tends to infinity, we see $S_t$ tends to the classical Geometric Brownian Motion (GBM) model used to derive the classical Black-Scholes formula and used to derive tractable optimal investment strategies \cite{YuChingWeChufandGu__JOTA_2023,GatheralSchied_GBMTradinig_IJTAF_2011}.

We remain \textit{agnostic} to specifications of $\mu$ and of $\sigma$ and instead, we adopt a machine learning approach.  Our first main objective is to implicitly infer the dynamics of $X_{\cdot}$by explicitly approximating its \textit{regular conditional distribution function} $x\mapsto \mathbb{P}(X_t\in \cdot\vert X_0=x)$.  Then, our second goal is to deduce the same for $S_{\cdot}$.  
Thus, we instead only postulate minimal regularity of the functions $\mu_{\cdot}$ and $\sigma_{\cdot}$, just enough for a \textit{deep neural network approximation} to the conditional probability distribution function of $X_{\cdot}$  to be viable.  

\paragraph{Contributions}
We will show that the \textit{geometric deep network} modelling framework of \cite{kratsios2022universal}, as specified in \cite[Corollary 39]{kratsios2022universal}, provides a universal solution to the problem of simultaneously predicting the regular conditional distributions of $X_{\cdot}$ and of $S_{\cdot}$, conditioned on the current state of the market $x$ for any given future time $t$.
In this case, the GDN implements a principled extension of the so-called \textit{deep Kalman filter} of \cite{krishnan2015deep}, which has recently also entered the mathematical finance literature in \cite{jaimungal2022reinforcement}.

\vspace{-.5em}
\subsection*{Relation to Other Deep Learning Models}

There have recently been several other probability-measure-valued deep learning models proposed in the literature.  For instance, \cite{Acciaio2022_GHT} proposes a deep learning framework for approximating any regular conditional distribution function when the target space of probability measures is equipped with the $1$-Wasserstein or adapted $p$-Wasserstein distances.  In the case of the simple market dynamics~\eqref{eq:SDE__assumed}, we will find that the GDN model is more economical in its theoretically guaranteed parameter count.  Unlike those models, its approximation-theoretic guarantees are necessarily limited to markets evolving according to generalized OU dynamics such as~\eqref{eq:SDE__assumed}, with non-singular volatility/diffusion.  
Gaussian-measure-valued deep learning models were experimentally considered in \cite{krishnan2015deep}.  


Additional results can be found in the arXiv version, while experimental support is provided at \cite{anastasisGit}.  

\section{Preliminaries}
\label{s:Preliminaries}
We review the necessary background required to formulate our main results.

\vspace{-1em}
\subsection{$2$-Wasserstein Riemannian Geometry}
\label{s:Preliminaries__WassersteinRiemannian}
We equip the set of $D$-dimensional Gaussian distributions with non-singular covariance, denoted by $\mathcal{N}_D$, with a smooth structure induced by the global chart
\begin{equation}
\label{eq:GlobalChart_ND}
\tag{Chart}
\begin{aligned}
\phi:\mathbb{R}^D\times \mathbb{R}^{D(D+1)/2} & \rightarrow \mathcal{D}\\
(\mu,\sigma) & \mapsto \mathcal{N}_D\Big(
    \mu,
    \exp\circ \operatorname{sym}(\sigma)
\Big)
,
\end{aligned}
\end{equation}
where $\exp$ is the matrix exponential and $\operatorname{sym}$ is the linear map sending any vector $X \in \mathbb{R}^{D(D+1)/2}$ to $D\times D$ symmetric matrix
\begin{equation}
\label{eq:Sym_Function}
\resizebox{.45 \textwidth}{!}{$
        \operatorname{sym}(X)
    \eqdef 
        \begin{pmatrix}
        X_{1}  & X_{2} &  \ldots  & X_{D} \\
        X_{2} & X_{D+1} &  \ldots  & X_{2D-1} \\
        \vdots&  & \ddots & \vdots \\
        X_D &  & \ldots      & X_{D(D+1)/2}.
        \end{pmatrix} 
$}
\end{equation}
Following \cite{takatsu2011wasserstein}, we equip $\mathcal{N}_D$ with a Riemannian
metric $g_{\mathcal{W}_2}$ defined at any $D$-dimensional Gaussian distribution $\mathcal{N}_{D}(\mu,\Sigma)$ with non-singular covariance matrix (i.e.\ any point in $\mathcal{N}_D)$) by
\[
    g_{\mathcal{W}_2,\,(\mu,\Sigma)}(u,v)
    \eqdef 
    \langle u_1,v_1\rangle 
    +
    \operatorname{tr}\big(\operatorname{sym}(u_2)\Sigma \operatorname{sym}(u_2)\big)
,
\]
where we have identified the tangent vectors $u,v$ at $\mathcal{N}_D(\mu,\Sigma)$ with Euclidean vectors via $u=(u_1,u_2),v=(v_1,v_2)\in \mathbb{R}^{D}\times \mathbb{R}^{D(D+1)/2}$.  Together $(\mathcal{N}_D,g_{\mathcal{W}_2})$ is a well-defined simply connected Riemannian manifold (whence it has a well-defined geodesic distance between any two points).  In \cite[Proposition A]{takatsu2011wasserstein}, the authors show that the geodesic distance on $(\mathcal{N}_D,g_{\mathcal{W}_2})$ coincides with the $2$-Wasserstein distance $\mathcal{W}_2$ on $\mathcal{N}_D$.  By \cite{gelbrich1990formula}, $\mathcal{W}_2$ admits the following closed-form for any $\mathcal{N}_D(\mu_1,\Sigma_1),\mathcal{N}_D(\mu_2,\Sigma_2)\in \mathcal{N}_D$
\[
       \mathcal{W}_2^2\big(
           \mathcal{N}_D(\mu_1,\Sigma_1)
       ,
           \mathcal{N}_D(\mu_2,\Sigma_2)
       \big) 
   = 
       \| \mu_1 - \mu_2 \|^2 + \operatorname{tr} \big( \Sigma_1 + \Sigma_2 - 2 ( \Sigma_2^{1/2} \Sigma_1 \Sigma_2^{1/2} )^{1/2} \bigr)
   ,
\]
where $\Sigma_i^{1/2}$ denotes the square-root of the positive-definite matrices $\Sigma_1$ and $\Sigma_2$. 

\subsection{The GDN Model}
\label{s:GDNs}

Figure~\ref{fig_finite_dimensional_congugation_approach} illustrates the GDN implements the top arrow between the (non-Euclidean) Riemannian manifolds $(\mathcal{N}_D,g_{\mathcal{W}_2})$ in two phases.  First, it transforms that vector via a deep feedforward neural network with ReLU activation function; then, it decodes the deep feedforward neural network output by interpreting them as the parameters defining $D$-dimensional Gaussian distribution with a non-degenerate covariance matrix, thus generating a $\mathcal{N}_D$-valued prediction.  

\begin{figure}[ht!]
\centering
\includegraphics[width=1\linewidth]{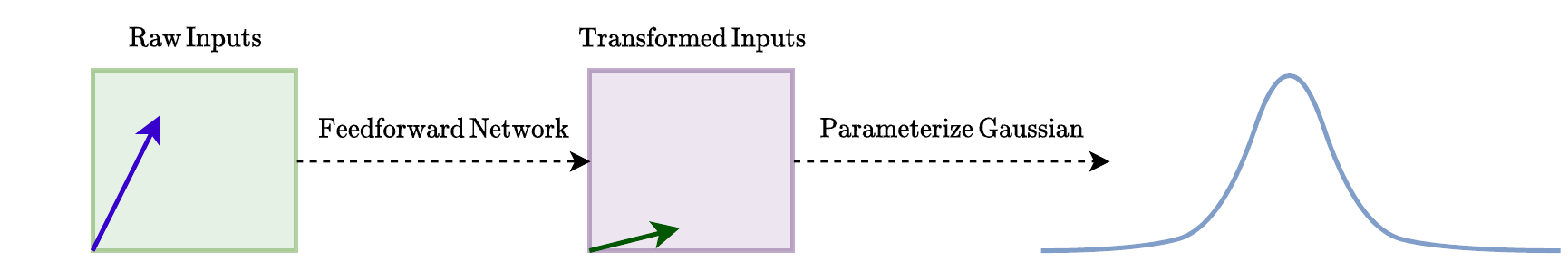}
\caption{\textit{Summary of the GDN Model Processing:}  First, it transforms the vectorial data using a deep feedforward neural network with a suitable activation function; next, the neural network output vectors are decoded as the parameters defining a Gaussian mean and covariance.  This Gaussian distribution is the GDN generated prediction.}
\label{fig_finite_dimensional_congugation_approach}
\end{figure}

\vspace{-1em}
\begin{definition}[Geometric Deep Network (GDN)]
\label{defn:GDN}
Fix a non-polynomial smooth ``activation function'' $\sigma:\mathbb{R}\rightarrow \mathbb{R}$, and $D,W,J\in \mathbb{N}_+$, a \textit{geometric deep network (GDN)} on $\mathcal{N}_D$ of \textit{width} $W$ and \textit{depth} $J$ is a map $\hat{f}:\mathbb{R}^{1+d}\rightarrow \mathcal{N}_D$ with representation: for every $\mathcal{N}_D(\mu,\Sigma)\in \mathcal{N}_D$
\[
\begin{aligned}
    \hat{f}(x) & \eqdef \varphi(A^{(J)}\,x^{(d)} + b^{(J)})\\
    x^{(k+1)} & \eqdef \sigma\bullet (A^{(k)}\,x^{(k)} + b^{(k)}) \qquad \mbox{for } k=0,\dots,J-1\\
    x^{(0)} & \eqdef (x,t),
\end{aligned}
\]
where $(x,t)\in \mathbb{R}^{1+d}\cong \mathbb{R}^d\times \mathbb{R}$, each $A^{(k)}\in \mathbb{R}^{d_k\times d_{k+1}}$, $b^{(k)}\in \mathbb{R}^{d_{k+1}}$, $D=d_0$, $d_J=D(D+1)/2$, $d_k\le W$ for every $k=1,\dots,J-1$, and $\phi$ as in~\eqref{eq:GlobalChart_ND}.
\end{definition}

\section{Main Results}
\label{s:Main_Results}
Our first result guarantees that the GDN model can approximate the distribution of $X_t^x$ at any future time $t$, given any initial state $x$, log returns imposing any modelling assumptions for the ``asset's drift'' $\mu_{\cdot}$ nor for its ``volatility'' $\sigma_{\cdot}$.  
\begin{theorem}[GDNs can Approximately Implement the Distribution of~\eqref{eq:SDE__assumed}]
\label{thrm:Main}
Fix a parameter $\delta>0$.
Let $K\subseteq \mathbb{R}^D$ be (non-empty) and compact and consider a ``time-horizon'' $T>\delta>0$.  For every ``approximation error'' $\epsilon>0$, there is a GDN $\hat{f}:\mathbb{R}^{1+D}\rightarrow \mathcal{N}_D$ satisfying the uniform estimate
\[
    \max_{x\in K,\,\delta\le t\le T}
        \mathcal{W}_2\big(
            \hat{f}(x,t)
        ,
            \mathbb{P}(X_t^x\in \cdot)
        \big)
    <
        \epsilon
.
\]
Moreover, if $t$ is fixed then $\hat{f}$ has width $D(6+2D+D^2)/2$ and depth $\mathcal{O}\big(\frac1{\epsilon^{2D}}\big)$.
\end{theorem}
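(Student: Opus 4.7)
The plan is to factor the target conditional law through the chart $\phi$ of~\eqref{eq:GlobalChart_ND}, reduce the problem to a uniform approximation of a Lipschitz Euclidean-valued map by a deep feedforward network, and then transport the estimate back to $\mathcal{N}_D$ using local Lipschitzness of $\mathcal{W}_2$ in chart coordinates.  By the Gaussianity of~\eqref{eq:SDE__assumed} recorded in the paper, each marginal satisfies $\mathbb{P}(X_t^x\in\cdot)=\mathcal{N}_D(\bar{\mu}_t^x,\Sigma_t)$ with $\Sigma_t \eqdef \int_0^t \sigma_s\sigma_s^\top\,ds$; since every $\sigma_s$ is symmetric positive definite and $t\ge\delta>0$, $\Sigma_t$ lies in the open SPD cone and $\log\Sigma_t$ is well defined.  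Writing
\[
    G(x,t) \eqdef \bigl(\bar{\mu}_t^x,\,\operatorname{sym}^{-1}(\log \Sigma_t)\bigr)\in\mathbb{R}^D\times\mathbb{R}^{D(D+1)/2},
\]
we obtain $\mathbb{P}(X_t^x\in\cdot)=\phi\circ G(x,t)$.  The map $G$ is Lipschitz on $K\times[\delta,T]$: its mean component is the flow of the linear ODE $\partial_t \bar{\mu}_t = \mu_t + M_t \bar{\mu}_t$ with Lipschitz time-dependent coefficients (so Lipschitz in $x$ uniformly in $t$), and its covariance component is the composition of the smooth map $\operatorname{sym}^{-1}\circ\log$ with the $C^1$ curve $t\mapsto\Sigma_t$, restricted to SPD matrices bounded below by $\lambda_{\min}(\Sigma_\delta)\,I>0$.

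Next, the GDN is, by Definition~\ref{defn:GDN}, the composition $\phi\circ\hat{G}$ of a feedforward network $\hat{G}$ with the chart.  The universal approximation theorem for feedforward networks with non-polynomial smooth activation, in the quantitative form underpinning \cite[Corollary~39]{kratsios2022universal}, produces, for any $\eta>0$, such a network $\hat{G}$ with
\[
    \sup_{(x,t)\in K\times[\delta,T]}\bigl\|\hat{G}(x,t)-G(x,t)\bigr\|<\eta.
\]
To convert this Euclidean error into a $\mathcal{W}_2$ error, observe that $G(K\times[\delta,T])$ is compact and lies inside the open domain of $\phi$; for $\eta$ small the outputs of $\hat{G}$ remain inside a fixed compact $\mathcal{C}$ in that domain.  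On $\mathcal{C}$, Gelbrich's closed form for $\mathcal{W}_2$ together with the smoothness of the matrix exponential and of $\Sigma\mapsto\Sigma^{1/2}$ on the SPD cone render $\phi$ Lipschitz with constant $L=L(\mathcal{C})<\infty$.  Setting $\eta=\epsilon/L$ and $\hat{f}\eqdef \phi\circ\hat{G}$ yields the announced uniform bound.

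For the quantitative width-and-depth statement, specialize to fixed $t$: $G(\cdot,t)$ is a Lipschitz map $K\subset\mathbb{R}^D\to\mathbb{R}^{D(D+3)/2}$, and the Yarotsky-type rate bundled in \cite[Corollary~39]{kratsios2022universal} gives a network of width $D(6+2D+D^2)/2$ and depth $\mathcal{O}(\epsilon^{-2D})$ achieving the required uniform error $\eta=\epsilon/L$, which is precisely the claimed bound.  \emph{The main obstacle is the transfer step}: the constant $L$ degenerates as $\lambda_{\min}(\Sigma_\delta)\downarrow 0$ through the derivatives of the matrix logarithm and of the square root near the boundary of the SPD cone, which is exactly what forces the hypothesis $t\ge\delta>0$.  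A secondary technical item is the routine but necessary verification that the ODE-defined map $(x,t)\mapsto \bar{\mu}_t^x$ admits the Lipschitz regularity demanded by the quantitative rate; this reduces to the standard Gronwall estimate for the fundamental matrix of the linear ODE.
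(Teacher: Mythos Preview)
Your argument is correct, but it proceeds along a genuinely different route from the paper's.  The paper never unpacks the chart $\phi$: it instead establishes directly that $(x,t)\mapsto \mathbb{P}(X_t^x\in\cdot)$ is a continuous (in fact $1/2$-H\"older in $t$, Lipschitz in $x$) map into $(\mathcal{N}_D,g_{\mathcal{W}_2})$ by bounding $\mathcal{W}_2$ via the coupling $\mathcal{W}_2\le \mathbb{E}[\|X_t^x-X_s^{\tilde x}\|^2]^{1/2}$ and invoking generic SDE stability estimates (\cite[Propositions~8.15--8.16]{da2014introduction}); it then applies the GDN universal approximation theorem \cite[Corollary~40]{kratsios2022universal} as a black box.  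Your approach instead opens that black box: you factor through $\phi$, obtain Lipschitz regularity of the Euclidean parameter map $G$ from the explicit ODE for $\bar{\mu}_t^x$ and the $C^1$ curve $t\mapsto\Sigma_t$, approximate $G$ by a feedforward network, and transfer the error via the local Lipschitzness of $\phi$ coming from Gelbrich's formula.  Your route actually yields sharper time-regularity (Lipschitz rather than $1/2$-H\"older, since here the covariance is deterministic and $C^1$), and it makes explicit where the lower bound $t\ge\delta$ enters (through $\lambda_{\min}(\Sigma_\delta)$ in the Lipschitz constant of $\phi$).  The paper's route is shorter and more robust to model perturbations, since the SDE stability argument would survive state-dependent diffusion coefficients where your explicit $\Sigma_t$ computation would not.
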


The power of the GDN model is that it can \textit{simultaneously approximate} the regular conditional distribution (RCD) of the clipped log returns process and the market prices.  

\begin{theorem}[Simultaneous Approximation of the Market RCD]
\label{thrm:simultaneous_approximation__S}
Consider the setting of Theorem~\ref{thrm:Main}, and let $\hat{f}$ the GDN obtained from that result.  For every $x\in K$ and each $t\in [\delta,T]$
\[
        \mathcal{W}_2\big(
            \mathbb{P}(S_t^x\in \cdot)
        ,   
            \mathcal{E}_{\#}(\hat{f}(t,x))
        \big)
    < 
        \sqrt{D}e^M
        \,
        \varepsilon
.
\]
\end{theorem}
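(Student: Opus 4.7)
The plan is to exploit the pushforward identity $\mathbb{P}(S_t^x\in \cdot) = \mathcal{E}_{\#}\bigl(\mathbb{P}(X_t^x\in \cdot)\bigr)$, which is immediate from the definition in~\eqref{eq:clipped_price_process}, together with the contractivity of pushforward under a Lipschitz map in the Wasserstein distance. Theorem~\ref{thrm:Main} then furnishes the required bound on the pre-pushforward error.

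First, I would recall (and include a one-line justification of) the standard fact that for any Lipschitz map $F:\mathbb{R}^D\to\mathbb{R}^D$ and any Borel probability measures $\mu,\nu$ on $\mathbb{R}^D$ with finite second moment,
\[
\mathcal{W}_2\bigl(F_{\#}\mu,\,F_{\#}\nu\bigr)\ \le\ \operatorname{Lip}(F)\cdot \mathcal{W}_2(\mu,\nu).
\]
This drops out of pushing an optimal coupling $\pi$ of $\mu,\nu$ forward by the product map $(F,F)$: the resulting coupling of $F_{\#}\mu,F_{\#}\nu$ has cost at most $\operatorname{Lip}(F)^2$ times the optimal cost of $\pi$.

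Second, I would bound $\operatorname{Lip}(\mathcal{E})$. The clipping pre-composition in~\eqref{eq:clipped_price_process} guarantees that the argument fed to the component-wise exponential stays uniformly of norm at most $M$, so each coordinate of $\mathcal{E}(x)$ lies in $[e^{-M},e^M]$ and each coordinate function has gradient of norm bounded by $e^M$ almost everywhere. Summing over the $D$ coordinates (and using that $\mathcal{E}$ is locally Lipschitz a.e.\ and globally Lipschitz by Rademacher plus the uniform Jacobian bound), one obtains $\operatorname{Lip}(\mathcal{E})\le \sqrt{D}\,e^M$.

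Third, I would chain the two estimates: for every $x\in K$ and $t\in[\delta,T]$,
\[
\mathcal{W}_2\bigl(\mathbb{P}(S_t^x\in \cdot),\, \mathcal{E}_{\#}(\hat{f}(t,x))\bigr)
\;=\;
\mathcal{W}_2\bigl(\mathcal{E}_{\#}\mathbb{P}(X_t^x\in \cdot),\, \mathcal{E}_{\#}\hat{f}(t,x)\bigr)
\;\le\;
\sqrt{D}\,e^M\,\mathcal{W}_2\bigl(\mathbb{P}(X_t^x\in \cdot),\, \hat{f}(t,x)\bigr),
\]
and Theorem~\ref{thrm:Main} bounds the last factor strictly by $\varepsilon$.

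The main obstacle is the Lipschitz estimate for $\mathcal{E}$: the normalising factor $1/\min\{1,\|x/M\|\}$ is non-smooth across the interface $\{\|x\|=M\}$, so some care is needed to argue a global Lipschitz constant (rather than merely a piecewise one) and to trace how the $\sqrt{D}$ factor appears when passing from the coordinate-wise derivative bound $e^M$ to the operator norm of the Jacobian. Everything else—the pushforward identity and the contractivity inequality—is either immediate from the definitions or a textbook Wasserstein fact, and the invocation of Theorem~\ref{thrm:Main} is uniform in $(x,t)\in K\times[\delta,T]$ by construction.
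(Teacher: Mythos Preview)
Your proposal is correct and matches the paper's proof, which packages your two ingredients---the Lipschitz bound on $\mathcal{E}$ and the $\mathcal{W}_2$-contractivity of pushforward by a Lipschitz map---into Lemma~\ref{lem:regularity_clippedexponential}. The obstacle you flag about non-smoothness at $\|x\|=M$ is handled in the paper by observing that $x\mapsto (\min\{1,\|x/M\|\})^{-1}x$ is precisely the metric projection $P$ onto the closed Euclidean ball $\overline{B}(0,M)$, hence $1$-Lipschitz; writing $\mathcal{E}=\exp\circ P$ then yields the global constant $\sqrt{D}\,e^M$ directly from the Jacobian bound of the component-wise exponential on that ball, with no piecewise analysis needed.
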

The Fundamental Theorem of Asset Pricing \cite{delbaen1994general}, implies that risk-neutral prices for contingent claims exist and are expressed as conditional expectations of the claim payoffs, computed under an equivalent martingale measure for the discounted market prices $(e^{-rt}S_t^x)_{t\ge 0}$.  
For illustrative simplicity, suppose that $r=0$ and that $S_{\cdot}^x$ is a $\mathbb{P}$-martingale.  Then contingent claims are computed as conditional expectations under $\mathbb{P}$.

\begin{theorem}[Automatic Contingent Claim Pricing]
\label{thrm:Main__Pricing}
Consider the setting and conclusion of Theorem~\ref{thrm:simultaneous_approximation__S}, let $r=0$, and suppose that $S_{\cdot}$ is a $\mathbb{P}$-martingale.  
For every \textit{Lipschitz} payoff function $V:\mathbb{R}^D\rightarrow \mathbb{R}$, and every $(x,t)\in K\times [\delta,T]$
\[
    \Big\vert
        \mathbb{E}_{U\sim \hat{f}(x,t)}\big[V\big(\mathcal{E}(U)\big)\big]
        -
        \mathbb{E}_{\mathbb{Q}}\big[V({S}_t^x)\big]
    \Big\vert
    <
        C\epsilon
,
\]
for some constant $C\ge 0$ depending only on $V$, $D$ and on $M$.
\end{theorem}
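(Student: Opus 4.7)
The idea is to reduce the claim to the Wasserstein estimate in Theorem~\ref{thrm:simultaneous_approximation__S} via Kantorovich--Rubinstein duality. Because we have set $r=0$ and assumed that $S_{\cdot}^x$ is a $\mathbb{P}$-martingale, $\mathbb{P}$ itself is an equivalent martingale measure for the discounted price process, so one may take $\mathbb{Q}=\mathbb{P}$ throughout.

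The first step is a pushforward rewriting: the change-of-variables formula yields $\mathbb{E}_{U\sim \hat{f}(x,t)}[V(\mathcal{E}(U))] = \mathbb{E}_{Y\sim \mathcal{E}_{\#}\hat{f}(x,t)}[V(Y)]$. The quantity to be bounded is therefore the difference between the integrals of the Lipschitz function $V$ against the two probability measures $\mathcal{E}_{\#}\hat{f}(x,t)$ and $\mathbb{P}(S_t^x\in\cdot)$ on $\mathbb{R}^D$. I would then invoke Kantorovich--Rubinstein duality together with the inequality $\mathcal{W}_1\le \mathcal{W}_2$ (which follows from Jensen's inequality applied to an optimal coupling): for any two Borel probability measures $\nu_1,\nu_2$ on $\mathbb{R}^D$ with finite first moment and any Lipschitz $V$, one has $|\mathbb{E}_{\nu_1}[V] - \mathbb{E}_{\nu_2}[V]|\le \mathrm{Lip}(V)\,\mathcal{W}_2(\nu_1,\nu_2)$. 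Applying this with $\nu_1 = \mathcal{E}_{\#}\hat{f}(x,t)$ and $\nu_2 = \mathbb{P}(S_t^x\in\cdot)$ and substituting the bound from Theorem~\ref{thrm:simultaneous_approximation__S} yields the desired estimate with the explicit constant $C=\mathrm{Lip}(V)\,\sqrt{D}\,e^M$, which depends only on $V$, $D$, and $M$.

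There is no substantive obstacle; the only point needing a line of justification is the finiteness of first moments required to apply the Kantorovich--Rubinstein bound. This holds because $\hat{f}(x,t)$ is a non-degenerate Gaussian and $\mathcal{E}$ is globally Lipschitz (so $\mathcal{E}_{\#}\hat{f}(x,t)$ has a finite first moment), while $X_t^x$ is Gaussian under $\mathbb{P}$ and $S_t^x=\mathcal{E}(X_t^x)$ (so $\mathbb{P}(S_t^x\in\cdot)$ does as well). Since the Wasserstein bound from Theorem~\ref{thrm:simultaneous_approximation__S} is uniform in $(x,t)\in K\times[\delta,T]$, the pointwise estimate above is in fact uniform over the same set, as the statement requires.
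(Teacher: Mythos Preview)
Your proposal is correct and follows essentially the same route as the paper: rewrite the expectation via the pushforward, invoke Kantorovich--Rubinstein duality together with $\mathcal{W}_1\le \mathcal{W}_2$, and then plug in the $\mathcal{W}_2$-bound from Theorem~\ref{thrm:simultaneous_approximation__S} to obtain $C=\mathrm{Lip}(V)\sqrt{D}\,e^M$. The only cosmetic difference is that the paper normalizes $V$ by a bounded-Lipschitz norm and treats the constant case separately, whereas you normalize directly by $\mathrm{Lip}(V)$; your version is the standard Kantorovich--Rubinstein statement and is arguably cleaner.
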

Theorem~\ref{thrm:Main__Pricing} implies that once $\hat{f}$ is trained, then we can directly approximate any contingent claim on $S_{\cdot}$ by simply sampling $V(\mathcal{E}(U))$ where $U$ is distributed according to $\hat{f}(x,t)$.  
We conclude by proving our guarantees for the GDN model.  
\vspace{-1em}
\section{Proofs}
\begin{lemma}[Gaussianity of the SDE~\eqref{eq:SDE__assumed} Solutions]
\label{lem:Gaussianity_SDE}
For any $x\in \mathbb{R}^D$ and any $t>0$, the random vector $S_t^x$ is distributed according to a $D$-dimensional Gaussian distribution with non-singular covariance; more precisely
\[
S_t^x \sim 
\mathcal{N}_D\Big(
\bar{\mu}_t
,
\int_0^t\, \sigma_s\sigma_s^{\top}
\,dt
\Big);
\]
where $\bar{\mu}$ is continuous and solves $\partial_t{\bar{\mu}}_t = \mu_t + M_t\bar{\mu}_t$ with initial condition $\bar{\mu}_0=x$.
\end{lemma}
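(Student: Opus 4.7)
The plan is to recognize that \eqref{eq:SDE__assumed} is an inhomogeneous linear SDE with time-varying coefficients, so the solution admits a variation-of-constants representation, and Gaussianity then follows from the fact that the driving noise enters linearly (and interpreting $S_t^x$ in the lemma as the clipped log returns process $X_t^x$, since $S_t = \mathcal{E}(X_t)$ is not itself Gaussian after composition with the clipped exponential).

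First I would introduce the fundamental solution $\Phi_{\cdot}$ of the matrix ODE $\partial_t \Phi_t = M_t \Phi_t$ with $\Phi_0 = I_D$, which exists, is continuously differentiable, and is invertible on $[0,T]$ because $M_{\cdot}$ is continuous (Lipschitz). Applying It\^o's formula to $t \mapsto \Phi_t^{-1} X_t^x$ and integrating, one obtains the explicit representation
\[
X_t^x
=
\Phi_t \, x
+ \Phi_t \int_0^t \Phi_s^{-1} \mu_s \, ds
+ \Phi_t \int_0^t \Phi_s^{-1} \sigma_s \, dW_s
.
\]
The first two terms are deterministic, and the stochastic integral is a limit in $L^2$ of Riemann-sum approximations, each of which is a linear combination of independent Gaussian increments $W_{s_{k+1}} - W_{s_k}$ and hence Gaussian; since $L^2$-limits of Gaussian vectors are Gaussian, $X_t^x$ is Gaussian.

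Next I would compute the two parameters. Taking expectation and using the vanishing of the It\^o integral gives $\bar{\mu}_t := \mathbb{E}[X_t^x] = \Phi_t x + \Phi_t \int_0^t \Phi_s^{-1} \mu_s \, ds$; differentiating under the integral sign and using $\partial_t \Phi_t = M_t \Phi_t$ recovers precisely the claimed ODE $\partial_t \bar{\mu}_t = \mu_t + M_t \bar{\mu}_t$ with $\bar{\mu}_0 = x$, and continuity of $\bar{\mu}$ is immediate. By It\^o's isometry the covariance of $X_t^x$ equals
\[
\int_0^t \Phi_t \Phi_s^{-1} \sigma_s \sigma_s^{\top} (\Phi_s^{-1})^{\top} \Phi_t^{\top} \, ds
,
\]
which in the special case $M_{\cdot}\equiv 0$ reduces to the advertised $\int_0^t \sigma_s \sigma_s^{\top}\,ds$.

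The main (mild) obstacle is showing \emph{non-singularity} of this covariance for every $t > 0$. For each $s \in [0,t]$ the integrand is a congruence transform of the positive-definite matrix $\sigma_s \sigma_s^{\top}$ by the invertible matrix $\Phi_t \Phi_s^{-1}$, hence is positive definite; since $s\mapsto \sigma_s$ is continuous on $[0,T]$ and takes values in the positive-definite cone, there is a uniform lower bound $\lambda_{\min}(\sigma_s \sigma_s^{\top}) \geq c > 0$ on $[0,t]$, and a uniform bound on the smallest singular value of $\Phi_t \Phi_s^{-1}$, so the integrand is bounded below (in the Loewner order) by $c' I_D$ for some $c'>0$ on a set of positive Lebesgue measure; integrating yields a positive-definite covariance matrix, placing the law of $X_t^x$ in the chart $\mathcal{N}_D$ used throughout the paper.
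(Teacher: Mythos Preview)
Your argument is correct and, in fact, more detailed than what the paper provides: the paper does not include a proof of this lemma at all, merely remarking earlier in the text that the Gaussianity and the identification of mean and covariance follow from ``a simple/classical computation.'' Your variation-of-constants derivation via the fundamental solution $\Phi_t$ of $\partial_t\Phi_t=M_t\Phi_t$ is exactly the standard route one would take, and your treatment of non-singularity of the covariance (continuity of $\sigma_{\cdot}$ in the positive-definite cone plus invertibility of $\Phi_t\Phi_s^{-1}$) is clean.

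Two observations you make are worth highlighting, since they concern the \emph{statement} rather than the proof. First, you correctly read ``$S_t^x$'' in the lemma as ``$X_t^x$''; the paper's notation is inconsistent here. Second, and more substantively, you correctly compute the covariance of $X_t^x$ to be
\[
\int_0^t \Phi_t\Phi_s^{-1}\sigma_s\sigma_s^{\top}(\Phi_s^{-1})^{\top}\Phi_t^{\top}\,ds,
\]
which only reduces to the paper's displayed $\int_0^t\sigma_s\sigma_s^{\top}\,ds$ when $M_{\cdot}\equiv 0$. The paper's stated covariance formula is therefore inaccurate in the general case it claims to cover; your proof yields the correct expression, and the non-singularity argument you give applies to it without change.
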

Lemma~\ref{lem:Gaussianity_SDE} implies that the map $(x,t)\mapsto \mathbb{P}(S_t^x\in \cdot)$ takes values in $(\mathcal{N}_D,g_{\mathcal{W}_1})$ so that we can apply the universal approximation theorem of \cite{kratsios2022universal}.  However, need to verify that the target function is continuous.  The next Lemma implies that $(x,t)\mapsto \mathbb{P}(S_t^x\in \cdot)$ has the required regularity to apply the results of \cite{kratsios2022universal}.
\begin{lemma}[Stability Estimate for $(t,x)\mapsto X_t^x$]
\label{lem:Stability}   
Fix a compact subset $K\subseteq \mathbb{R}^D$ and a positive ``time--horizon'' $T>0$.  Then the map 
\[
\begin{aligned}
\mathbb{R}^D\times [0,\infty) & \rightarrow (\mathcal{N}_D,g_{\mathcal{W}_2})\\
(x,t) & \mapsto X_t^x
\end{aligned}
\]
is Lipschitz in $x$ and $1/2$-H\"{o}lder in $t$, over $K\times [0,T]$.
\end{lemma}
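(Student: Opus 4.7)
The plan is to use the Gaussian description from Lemma~\ref{lem:Gaussianity_SDE} to reduce the stability estimate to two classical computations: a Gr\"onwall bound on the mean and a Powers--St{\o}rmer bound on the covariance. By Lemma~\ref{lem:Gaussianity_SDE}, $X_t^x\sim \mathcal{N}_D(\bar\mu_t^x,\Sigma_t)$ with $\Sigma_t\eqdef \int_0^t\sigma_u\sigma_u^\top\,du$ and $\bar\mu^x$ solving the linear ODE $\partial_t\bar\mu_t^x=\mu_t+M_t\bar\mu_t^x$ with $\bar\mu_0^x=x$. Plugging this into the Gelbrich closed form recalled in Section~\ref{s:Preliminaries__WassersteinRiemannian} yields
\begin{equation*}
    \mathcal{W}_2^2\big(X_t^x,X_s^y\big) = \|\bar\mu_t^x-\bar\mu_s^y\|^2 + \operatorname{tr}\big(\Sigma_t+\Sigma_s-2(\Sigma_s^{1/2}\Sigma_t\Sigma_s^{1/2})^{1/2}\big),
\end{equation*}
so the two regularity estimates can be obtained by controlling the two summands separately.

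For the Lipschitz estimate in $x$, I would fix $t\in[0,T]$ so that the covariance summand vanishes identically. The difference $\delta_u\eqdef \bar\mu_u^x-\bar\mu_u^y$ satisfies the homogeneous linear ODE $\partial_u\delta_u=M_u\delta_u$ with initial condition $x-y$; since $M$ is continuous on $[0,T]$, Gr\"onwall's inequality immediately delivers $\|\bar\mu_t^x-\bar\mu_t^y\|\le e^{\int_0^T\|M_u\|\,du}\,\|x-y\|$ uniformly in $t\in[0,T]$, which yields the desired Lipschitz constant.

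For the $1/2$-H\"older estimate in $t$, I would fix $x\in K$ and $0\le s\le t\le T$ and again bound the two summands separately. The mean term is controlled directly: $\|\bar\mu_t^x-\bar\mu_s^x\|\le \int_s^t \|\mu_u+M_u\bar\mu_u^x\|\,du \le C|t-s|$, using that $\bar\mu_u^x$ stays uniformly bounded on $K\times[0,T]$ (again by Gr\"onwall applied to the inhomogeneous linear ODE). For the covariance term, I would combine the representation of the Bures--Wasserstein distance as a minimum over unitaries (which yields $\operatorname{tr}\big(\Sigma_t+\Sigma_s-2(\Sigma_s^{1/2}\Sigma_t\Sigma_s^{1/2})^{1/2}\big)\le \|\Sigma_t^{1/2}-\Sigma_s^{1/2}\|_F^2$ by setting $U=I$) with the Powers--St{\o}rmer inequality $\|\Sigma_t^{1/2}-\Sigma_s^{1/2}\|_F^2\le \|\Sigma_t-\Sigma_s\|_1$, and then note that $\Sigma_t-\Sigma_s = \int_s^t\sigma_u\sigma_u^\top\,du$ is bounded in trace norm by $C|t-s|$ by continuity of $\sigma$ on $[0,T]$. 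Summing the two contributions gives $\mathcal{W}_2^2(X_t^x,X_s^x)\le C|t-s|$, which is exactly the claimed $1/2$-H\"older estimate.

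The main obstacle will be the covariance estimate. Because the matrix square-root is not Lipschitz in the Frobenius norm without a spectral lower bound, a naive estimate on $(\Sigma_s^{1/2}\Sigma_t\Sigma_s^{1/2})^{1/2}$ would degrade the rate. The Powers--St{\o}rmer inequality is exactly what is needed to circumvent this and match the intrinsic regularity of the Brownian increment, whose variance is of order $|t-s|$; once passed through the square-root built into $\mathcal{W}_2$, this produces the optimal $|t-s|^{1/2}$ time-H\"older exponent and explains why the statement degrades from Lipschitz (in $x$) to $1/2$-H\"older (in $t$).
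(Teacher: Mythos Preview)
Your argument is correct, but it follows a genuinely different route from the paper's. The paper does not unpack the Gaussian structure at all: it bounds $\mathcal{W}_2\big(\mathbb{P}(X_t^x\in\cdot),\mathbb{P}(X_s^{\tilde x}\in\cdot)\big)$ above by $\mathbb{E}[\|X_t^x-X_s^{\tilde x}\|^2]^{1/2}$ via the natural coupling of the two SDE solutions on the same Brownian motion, and then cites standard $L^2$-stability estimates for SDEs with Lipschitz coefficients (\cite[Propositions~8.15 and~8.16]{da2014introduction}) to obtain $C(|t-s|^{1/2}+\|x-\tilde x\|)$ in one stroke; Takatsu's identification of the geodesic distance with $\mathcal{W}_2$ finishes the argument. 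By contrast, you exploit the explicit law from Lemma~\ref{lem:Gaussianity_SDE} and the Gelbrich formula, splitting into a mean part (handled by Gr\"onwall) and a covariance part (handled by the Bures variational representation plus Powers--St{\o}rmer). Your approach is more self-contained and transparent about where the $1/2$ exponent comes from---namely the trace-norm-linear growth of $\int_s^t\sigma_u\sigma_u^\top\,du$ passed through a square root---whereas the paper's approach is shorter, applies verbatim to any SDE with Lipschitz coefficients (Gaussianity is never used), and delivers the joint $(x,t)$ estimate directly rather than assembling it from the two separate ones.
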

\begin{proof}
For each $t,s>0$ and every $x,\tilde{x}\in \mathbb{R}^D$, we have
\begin{equation}
\label{PROOF_lem:Gaussianity_SDE__A}
\mathcal{W}_1\big(
\mathbb{P}(X_t^x\in \cdot),\mathbb{P}(S_s^{\tilde{x}}\in \cdot)
\big)
\le 
\mathbb{E}\big[
\|X_t^x - X_s^{\tilde{x}}\|^2
\big]^{1/2}
.
\end{equation}
Applying \cite[Propositions 8.15 and 8.16]{da2014introduction} to the right-hand side of~\eqref{PROOF_lem:Gaussianity_SDE__A} yields
\begin{equation}
\label{PROOF_lem:Gaussianity_SDE__B}
\mathcal{W}_2\big(
\mathbb{P}(X_t^x\in \cdot),\mathbb{P}(X_s^{\tilde{x}}\in \cdot)
\big)
\le 
\mathbb{E}\big[
\|X_t^x - X_s^{\tilde{x}}\|^2
\big]^{1/2}
\le 
C(\|t-s\|^{1/2}+\|x-\tilde{x}\|)
,
\end{equation}
for some constant $C\ge 0$ depending on $K\times [0,T]$.  The result then follows from \cite[Proposition A]{takatsu2011wasserstein}, which states that the geodesic distance on $(\mathcal{N}_D,g_{\mathcal{W}_2})$ coincides with the restriction of the $2$-Wasserstein distance thereto.  
\end{proof}

\label{s:Proofs}
\begin{proof}[{Proof of Theorem~\ref{thrm:Main}}]
By Lemma~\ref{lem:Gaussianity_SDE} for every $(x,t)\in K \times [\delta,T]$ the random vector $X_t^x$ is distributed according to a Gaussian distribution with a non-singular covariance matrix.  Thus, the map $f(x,t)\mapsto \mathbb{P}(X_t^x\in \cdot)$ takes values in $\mathcal{N}_D$.  By our stability estimate, namely Lemma~\ref{lem:Stability}, $f$ is a Lipschitz function; in particular, it is continuous.  Therefore, \cite[Corollary 40]{kratsios2022universal} applies; whence, for every given $\epsilon>0$ there is a GDN satisfying $\max_{(x,t)\in K\times [\delta,T]}\,\mathcal{W}_2\big(f(x,t),\hat{f}(x,t)\big)<\epsilon$.
Furthermore, if $t$ is fixed, then the depth and width of $\hat{f}$ are given in the first row of~\cite[Table 1]{kratsios2022universal}; since $x\mapsto \mathbb{P}(X_t^x)$ is Lipschitz.
\end{proof}

\begin{lemma}
\label{lem:regularity_clippedexponential}
The push-forward $\mathcal{E}_{\#}$ is a well-defined map from $(\mathcal{P}_2(\mathbb{R}^D),\mathcal{W}_2)$ of $\sqrt{D}e^M$-Lipschitz continuity.  In particular, $\mathcal{E}_{\#}$ is a Lipschitz map to $(\mathcal{P}_2(\mathbb{R}^D),\mathcal{W}_1)$.
\end{lemma}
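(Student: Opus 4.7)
My plan is to prove the lemma via the standard push-forward Lipschitz principle: any $L$-Lipschitz map between metric spaces induces an $L$-Lipschitz map between the associated $\mathcal{W}_2$ spaces of probability measures. The three things to establish are: (i) $\mathcal{E}:\mathbb{R}^D\to\mathbb{R}^D$ itself has Lipschitz constant at most $\sqrt{D}e^M$; (ii) the push-forward $\mathcal{E}_{\#}$ is well-defined on $\mathcal{P}_2(\mathbb{R}^D)$; and (iii) the general coupling argument transfers the Lipschitz constant to $(\mathcal{P}_2(\mathbb{R}^D),\mathcal{W}_2)$. The final sentence about $\mathcal{W}_1$ is then immediate from $\mathcal{W}_1\le \mathcal{W}_2$.

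For step (i), I would decompose $\mathcal{E}=\exp\circ \Phi$ where $\Phi(x)$ is the radial clipping taking values in the closed ball of radius $M$ (i.e.\ the vector $x$ rescaled by the factor in~\eqref{eq:clipped_price_process}). The map $\Phi$ is $1$-Lipschitz because it is the metric projection onto a closed convex set. On the ball $\{y:\|y\|\le M\}$ the component-wise exponential has Jacobian $\operatorname{diag}(e^{y_1},\ldots,e^{y_D})$ whose operator norm is bounded by $e^M$, and passing from a component-wise $e^M$-Lipschitz bound to a Euclidean bound gives at worst the stated $\sqrt{D}e^M$ factor. For step (ii), observe that $\mathcal{E}(\mathbb{R}^D)$ is bounded (contained in $[0,e^M]^D$), hence for any $\mu\in\mathcal{P}_2(\mathbb{R}^D)$ the measure $\mathcal{E}_{\#}\mu$ has compact support and in particular belongs to $\mathcal{P}_2(\mathbb{R}^D)$.

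For step (iii), given $\mu,\nu\in\mathcal{P}_2(\mathbb{R}^D)$ and any optimal coupling $\pi$ for $\mathcal{W}_2(\mu,\nu)$, the product map $(\mathcal{E}\otimes\mathcal{E})_{\#}\pi$ is a coupling of $(\mathcal{E}_{\#}\mu,\mathcal{E}_{\#}\nu)$. Using the $\sqrt{D}e^M$-Lipschitz bound on $\mathcal{E}$ pointwise,
\[
    \mathcal{W}_2^2(\mathcal{E}_{\#}\mu,\mathcal{E}_{\#}\nu)
    \le
    \int \|\mathcal{E}(x)-\mathcal{E}(y)\|^2\,d\pi(x,y)
    \le
    D\,e^{2M}\int \|x-y\|^2\,d\pi(x,y)
    =
    D\,e^{2M}\,\mathcal{W}_2^2(\mu,\nu),
\]
which gives the desired Lipschitz constant upon taking square roots. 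The $\mathcal{W}_1$-version follows from the standard inequality $\mathcal{W}_1\le \mathcal{W}_2$ obtained via Jensen's inequality on any coupling.

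No step here is a genuine obstacle; the only mildly subtle point is converting the component-wise bound $e^M$ on the exponential's derivatives into a Euclidean operator-norm bound, where the dimension factor $\sqrt{D}$ enters. Everything else is bookkeeping with the general push-forward principle and the closed-form for $\mathcal{W}_p$ in terms of couplings.
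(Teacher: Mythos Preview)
Your proposal is correct and follows essentially the same route as the paper: decompose $\mathcal{E}=\exp\circ P$ with $P$ the $1$-Lipschitz metric projection onto the closed ball of radius $M$, bound the Lipschitz constant of $\exp$ on that ball, then push any (optimal) coupling forward via $(\mathcal{E},\mathcal{E})$ to transfer the Lipschitz constant to $\mathcal{W}_2$, and finish with $\mathcal{W}_1\le\mathcal{W}_2$. The only cosmetic differences are that the paper invokes Rademacher's theorem where you compute the diagonal Jacobian directly, and the paper takes an arbitrary coupling and infimizes at the end rather than starting from an optimal one; neither changes the argument.
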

\begin{proof}[{Proof of Lemma~\ref{lem:regularity_clippedexponential}}]
We first observe that $\mathcal{E}$ is Lipschitz.  To see this, note that $x\mapsto \exp\big((\min\{1,\|x/M\|\})^{-1}\cdot x\big)$ is precisely the orthogonal projection $P$ of $\mathbb{R}^d$ onto the closed Euclidean ball $\overline{B}_{\mathbb{R}^D,\|\cdot\|_2}(0,M)$ of radius $M>0$ of about $0\in \mathbb{R}^D$.  Since $\overline{B}_{\mathbb{R}^D,\|\cdot\|_2}(0,M)$ is a closed convex set, then this projection is well-defined and $1$-Lipschitz (see \cite[Example 12.25 and Proposition 12.27]{BauschkeCombettes_2017_ConvexAnalysisHilberMonotoneCMSBook}).  Since $\mathcal{E}$ is given by the composition $\mathcal{E}=\exp\circ P$ (here $\exp$ is ``composed'' component-wise), $P$ is $1$-Lipschitz, and since the composition of Lipschitz functions is again Lipschitz, then $\mathcal{E}$ if $\exp$ is Lipschitz on the range of $P$.  
By Rademacher's theorem (see \cite[Theorem 3.16]{federer2014geometric}), if $\exp$ were to be $L$-Lipschitz on the range of $P$, then $\sup_{x\in P(\mathbb{R}^D)}\, \|\nabla \exp(x)\|$ must be finite; in which case this quantity is equal to its Lipschitz constant $L$.  This is indeed the case, since
\begin{equation}
\label{PROOF_eq:lem:regularity_clippedexponential___Rademacher}
    \sup_{x\in p(\mathbb{R}^D)}\, \|\nabla \exp(u)\|\le \sqrt{D}\, \max_{-M\le v\le M}\, \sqrt{D} e^M < \infty.
\end{equation}
Thus, $\mathcal{E}$ is -Lipschitz, with Lipschitz constant bounded-above by $L\eqdef \sqrt{D}e^M$.  

It is straight-forward to see that $\mathcal{E}_{\#}$ is well-defined and maps $\mathcal{P}_2(\mathbb{R}^D)$ to itself, since we have just seen that $\mathcal{E}$ is Lipschitz.  To see that $\mathcal{E}_{\#}$ is $\sqrt{D}e^M$-Lipschitz, fix any two $\mu,\nu\in \mathcal{P}_f(\mathbb{R}^m)$ a transport plan $\pi$ between them.  Define the ``induced diagonal transport plan'' $\tilde{\pi}:=(\mathcal{E},\mathcal{E})_{\#}\pi$ and simply note that $\tilde{\pi}$ is indeed a transport plan between $\mathcal{E}_{\#}\mu$ and $\mathcal{E}_{\#}\nu$.  We then compute
$$
\begin{aligned}
        \mathcal{W}_2^2(\mathcal{E}_{\#}\mu,\mathcal{E}_{\#}\nu)
    \le &
        \mathbb{E}_{(U_1,U_2)\sim \tilde{\pi}}[\|U_1-U_2\|^2]\\
    = &\mathbb{E}_{(V_1,V_2)\sim \pi}[\|\mathcal{E}(V_1)-\mathcal{E}(V_2)\|^2] \\
        \le &\mathbb{E}_{(V_1,V_2)\sim \pi}[De^{2M}\|V_1-V_2\|^2] \\
    = &
        De^{2M} \mathbb{E}_{(V_1,V_2)\sim \pi}[\|V_1-V_2\|^2].
\end{aligned}
$$
We complete the proof by first square-rooting both sides of the inequality and then taking the infimum over all transport plans $\pi$ between $\mu$ and $\nu$; thus
$$
\begin{aligned}
\mathcal{W}_2(\mathcal{E}_{\#}\mu,\mathcal{E}_{\#}\nu) \le \sqrt{D}e^M \inf_{\pi}
\,\mathbb{E}_{(V_1,V_2)\sim \pi}[\|V_1-V_2\|^2]^{1/2} =\sqrt{D}e^M\mathcal{W}_2(\mu,\nu).
\end{aligned}
$$
Since $\mathcal{W}_1\le \mathcal{W}_2$ for any probability measures, the second claim follows.  
\end{proof}
\begin{proof}[{Proof of Theorem~\ref{thrm:simultaneous_approximation__S}}]
By~\eqref{eq:clipped_price_process}, we have that $\mathbb{P}(S_t^x\in \cdot) = \mathcal{E}_{\#}\mathbb{P}(X_t^x\in \cdot)$.  By Lemma~\ref{lem:Gaussianity_SDE} $\mathbb{P}(X_t^x\in \cdot)$ belongs is a $D$-Dimensional Gaussian measure and therefore it belongs to $\mathcal{P}_2(\mathbb{R}^D)$.  Likewise, by construction $\hat{f}$ is also $D$-Dimensional Gaussian measure; thus, it also belongs to $\mathcal{P}_2(\mathbb{R}^D)$.  Therefore, Lemma~\ref{lem:regularity_clippedexponential} applies, allowing us to deduce that
\begin{equation}
\label{PROOF_eq:eq:clipped_price_process__LipschitzEstimate}
\begin{aligned}
        \mathcal{W}_2\big(
            \mathbb{P}(S_t^x\in \cdot)
        ,   
            \mathcal{E}_{\#}(\hat{f}(t,x))
        \big)
    \le & 
        \sqrt{D}e^M
        \,
        \mathcal{W}_2\big(
            \mathbb{P}(X_t^x\in \cdot)
        ,   
            \hat{f}(t,x)
        \big)
    .
\end{aligned}
\end{equation}
Since $\hat{f}$ is as in Theorem~\ref{thrm:Main} then, the right-hand side of~\eqref{PROOF_eq:eq:clipped_price_process__LipschitzEstimate} is less than $\sqrt{D}e^M\varepsilon$.
\end{proof}

\begin{proof}[{Proof of Theorem~\ref{thrm:Main__Pricing}}]
If $V$ is constant, then the result is clear.  Therefore, assume that $V$ is non-constant.  
Since $\mathcal{W}_1\le \mathcal{W}_2$, then Theorem~\ref{thrm:simultaneous_approximation__S} implies that, for every $(x,t)\in K\times [\delta,T]$
\begin{equation}
\label{PROOF__eq:thrm:Main__Pricing}
        \mathcal{W}_1\Big(
            \mathcal{E}_{\#}\hat{f}(x,t)
        ,
            \mathbb{Q}(\tilde{S}_t^x\in \cdot)
        \Big)
    \le 
        C_1
        \epsilon
,
\end{equation}
were the for the constant $C_1\eqdef \sqrt{D}\,e^M$.  
By the Kantorovich-Rubinstein duality (see \cite[Theorem 9.6]{gozlan2017kantorovich}) the left-hand side of~\eqref{PROOF__eq:thrm:Main__Pricing} can be rewritten as
\allowdisplaybreaks
\begin{align}
\label{PROOF_eq:Estimate}
    \sup_{
    \tilde{V}\in \operatorname{Lip}(\mathbb{R}^D,\mathbb{R};1)}\,
        \Big\vert
        \mathbb{E}_{U\sim \mathcal{E}_{\#}\hat{f}(x,t)}[\tilde{V}(U)]
        -
        \mathbb{E}_{\mathbb{Q}}\big[\tilde{V}(\tilde{S}_t^x)\big]
    \Big\vert
    =
        \mathcal{W}_1\Big(
            \hat{f}(x,t)
        ,
            \mathbb{Q}(\tilde{S}_t^x\in \cdot)
        \Big)
\end{align}
for every $x\in K$ and each $t\in [\delta,T]$; 
where $\operatorname{Lip}(\mathbb{R}^D,\mathbb{R};1)$ is the set of real-valued Lipschitz maps on $\mathbb{R}^D$ with Lipschitz norm $\|\tilde{V}\|_{Lip}\eqdef \sup_{x\in \mathbb{R}^D}\vert \tilde{V}(x)\vert + \operatorname{Lip}(\tilde{V})$ at-most $1$ \textit{(note, we the Lipschitz norm as $\infty$ if the map $\tilde{V}$ is not Lipschitz since its ``Lipschitz constant'' $\operatorname{Lip}(\tilde{V})$ would be infinite)}.  Since $V$ is non-constant then $\operatorname{Lip}(V)>0$ is positive.  Thus, $\tilde{V}\eqdef [\|V\|_{Lip}]^{-1}\cdot V$ is well-defined and~\eqref{PROOF_eq:Estimate} implies
\begin{equation}
\label{PROOF_eq:Estimate__B}
    \frac1{
        \|V\|_{Lip}
    }
    \,
        \Big\vert
        \mathbb{E}_{U\sim \mathcal{E}_{\#}\hat{f}(x,t)}[V(U)]
        -
        \mathbb{E}_{\mathbb{Q}}\big[V(\tilde{S}_t^x)\big]
        \Big\vert
    =
        \mathcal{W}_1\Big(
            \mathcal{E}_{\#}\hat{f}(x,t)
        ,
            \mathbb{Q}(\tilde{S}_t^x\in \cdot)
        \Big)
,
\end{equation}
for $(x,t) \in K\times \{T\}$.  
We conclude by multiplying~\eqref{PROOF_eq:Estimate__B} by $\|V\|_{Lip}$, setting $C\eqdef C_1\,\|V\|_{\operatorname{Lip}}$, and using the change-of-variable formula for push-forward measures.
\end{proof}
\vspace{-2em}
\bibliographystyle{splncs04}
\bibliography{Bookkeaping/2_References}
\end{document}